\newcommand{\rev}[1]{{\color{black} #1}}
\title{Cooperative Target Defense under Communication and Sensing Constraints}
\author{ Dipankar Maity, Arman Pourghorban
\thanks{The authors are with the Department of Electrical and Computer Engineering, University of North Carolina at Charlotte, NC, 28223, USA. 
Email: 
{\tt dmaity@charlotte.edu, apourgho@charlotte.edu}
}%
\thanks{
This research is supported by the ARL grant ARL DCIST CRA W911NF-17-2-0181
}
}
\newcommand{\re}[1]{{\color{red} #1}}
\newcommand{\T}{^{\mbox{\tiny \sf T}}}
\newcommand{\xa}{\mathbf{x}_{N+1}}
\newcommand{\xii}{\boldsymbol{\xi}}
\newcommand{\uvec}{\hat{\mathbf{u}}}
\newcommand{\x}{\mathbf{x}}
\newcommand{\sign}{\text{sgn}}
\renewcommand{\Re}{\mathbb{R}}
\newtheorem{lemma}{Lemma}
\newtheorem{theorem}{Theorem}
\newtheorem{remark}{Remark}
\newtheorem{assm}{Assumption}
\begin{document}

\maketitle
\thispagestyle{empty}
\begin{abstract}
We consider a variant of the target defense problems where a group of defenders are tasked to simultaneously capture an intruder. The intruder's objective is to reach a target without being simultaneously captured by the defender team. 
Some of the defenders are sensing-limited and do not have any information regarding the intruder's position or velocity at any time. 
The defenders may communicate with each other using a connected communication graph. 
We propose a decentralized feedback strategy for the defenders, which transforms the simultaneous capture problem into a unique nonlinear consensus problem. 
We derive a sufficient condition for simultaneous capture in terms of the agents speeds, sensing, and communication capabilities. 
The proposed decentralized controller is evaluated through extensive numerical simulations.

\end{abstract}

\begin{keywords}
    Target-defense games, cooperative control, limited sensing
\end{keywords}

\section{Introduction}

\IEEEPARstart{C}{oordinated} target-defense problems, where a group of defenders cooperatively guards a static or dynamic target against intruders, have attracted significant research interests \cite{shishika2020review, jajodia2012moving}. 
\textit{Simultaneous capture} is often a key requirement for this type of problems; \rev{see \cite{jeon2010homing,kumar2019cooperative,sinha2021cooperative} and the references therein}.
That is, all the defenders shall simultaneously make contact with the intruder, requiring precise coordination and cooperation among the defenders to agree on the location where they will rendezvous with the intruder\rev{ \cite{yan2020distributed,zhou2016distributed}}.

%
Existing approaches typically involve a game theoretic solution which assumes each and every agent knows others' initial locations, dynamics, objective function, and capabilities (e.g., maximum speed, turn radius etc.) \cite{chen2016multi}. 
These methods seek an equilibrium strategy for the defenders and the intruder. 
Feedback type equilibrium strategies are developed where each agent's instantaneous control action depends on the states of all the agents. 
While these game theoretic methods are very powerful, they unfortunately do not extend to scenarios where, for instance, the dynamics of the intruder is unknown, or some of the defenders cannot sense the intruder, or some defenders can only communicate  with a part of the defender team, \rev{e.g., in certain military applications \cite{lee2023graph}}.
Differential games with incomplete knowledge \cite{shishika2024deception}, limited sensing \cite{maity2023optimal}, and communication \cite{maity2023efficient} are extremely challenging to deal with, and only a little is known to date that can be applied in this challenging problem.
Therefore, these practical challenges pertaining to limitations in sensing and communication, and incomplete knowledge about the opponent render a game theoretic formulation to be inadequate and intractable. 

In this letter, we formulate this target-defense scenario as a decentralized control problem for the defenders. 
Our main objective is to develop defenders' control strategies for \textit{simultaneous capture} via enabling each defender to leverage its own sensing and communication capability to coordinate with the rest of the group and come to a \textit{dynamic consensus} on the capture location and time. 
%
%
To this end, we borrow ideas from \textit{multi-agent dynamic consensus} problems \cite{olfati2007consensus}---\rev{more precisely, the \textit{leader-follower} type formulations \cite{hong2006tracking}}---to obtain a solution with theoretical guarantees.
Our method does not require the knowledge of the intruder's dynamics, objective function, or its speed capability, making the method applicable to a wide range of problems.

\rev{A key distinction between our approach and much of the existing literature on (leader-follower) consensus problems is that our consensus algorithm must be specifically designed to satisfy the physical limitations of the defenders (e.g., maximum speeds). For instance, the consensus-based optimal control approach for multi-agent pursuit-evasion games \cite{lopez2019solutions} does not account for physical speed limitations (i.e., bounds on control magnitudes). Additionally, linear consensus protocols \cite{olfati2007consensus}  are unsuitable for two primary reasons: (i) they achieve asymptotic consensus, whereas our defenders must capture targets within finite time, and (ii) they do not guarantee that the resulting controllers respect the physical actuation limits of the defenders.}
%
%

To the best of our knowledge, this is the first work to analyze a sensing and communication limited simultaneous capture problem for an unknown intruder dynamics. 
We derive a sufficient condition on the sensing, communication, and speed capabilities of the defenders to ensure simultaneous capture. 
We then derive an upper bound on the capture time. 
Our analysis also provides both qualitative and quantitative understanding on the sensing versus communication tradeoff required for simultaneous capture.

\section{Problem Formulation} \label{sec:ProbFormulation}
We consider a target defense problem in ${\mathbb{R}^2}$, similar to that studied in \cite{bajaj2019dynamic, shishika2021partial, pourghorban2022target}, where an intruder attempts to breach a point target $\mathcal{T}$.\footnote{In cases where the target is a region, $\mathcal{T}$ may represent either the center of the target or a point on its perimeter.} 
A group of $N$ defenders is tasked to simultaneously capture the intruder before it reaches the target.
Not all the defenders are equipped with sensing capabilities and they do not have the location of the intruder. 
The defenders, however, can communicate with each other via a connected communication graph $\mathcal{G}$ to exchange relevant information for simultaneously capturing the intruder.

As considered in the existing target defense work \cite{fuchs2010cooperative, yan2021cooperative, liang2020analysis}, the dynamics of the defenders and the intruder are given by
\begin{align} \label{eq:dyn}
    \dot{\x}_i(t) = v_i \uvec(\theta_i(t)), \qquad \forall ~~i =1,\dots, (N+1),
\end{align}
where $\uvec(\theta) = [\cos\theta, \sin\theta]\T$ is the unit vector making an angle of $\theta$ with the $x$-axis. $\x_i(t)$ and $\x_{N+1}(t)$ denote the positions of the $i$-th defender and the intruder, respectively, at time $t$.
The control inputs for the defenders and the intruder are their instantaneous heading angle $\theta_i(t)$. 
Their speeds are constant and denoted by $v_i$.  

The requirement for simultaneous capture is crucial; without it, the defenders cannot neutralize the intruder.
For any $i=1,\ldots,N$, we say that the $i$-th defender has captured the intruder at time $t$ if and only if $\|\x_i(t) - \x_{N+1}(t)\| = 0$.
Simultaneous capture happens at time $t$ if and only if $\sum_{i=1}^N \| \x_i(t) -\x_{N+1}(t)\| = 0$.
The requirement for simultaneous capture in our problem distinguishes it from prior research \cite{bajaj2019dynamic, shishika2021partial, pourghorban2022target}, where a single defender can neutralize the intruder upon contact. 
In our scenario, not all defenders have the capability to sense the intruder. 
This key distinction makes the simultaneous capture problem significantly more interesting and challenging than existing research on simultaneous capture \cite{von2019multi, von2020robust} where all the defenders can sense the intruder.

To reflect the effect of capture on the intruder's dynamics, we consider a  more accurate version of \eqref{eq:dyn} for the intruder's dynamics:
\begin{align}\label{eq:intruder_dynamics}
\dot{\x}_{N+1}(t)= \delta(\x)  v_{N+1} \uvec(\theta_{N+1}(t)) ,
\end{align}
where $\x=[\x_1\T,\dots,\x_N\T,\x_{N+1}\T]\T$ denotes the aggregated states of all the agents, and 
\begin{equation} \label{eq:delta_x}
    \delta(\x) = \sign\left(\sum_{i=1}^N \| \x_i -\x_{N+1}\|\right) 
\end{equation}
is an indicator variable denoting whether the intruder has been captured or not. 
Here,
\begin{align*}
    \sign(x) = \begin{cases}
        \frac{x}{|x|}, \quad &x \ne 0,\\
        0, & x=0
    \end{cases}
\end{align*}
denotes the sign function.
%
Once the simultaneous capture happens, we will have $\dot{\x}_{N+1}(t) = 0$ afterwards, denoting that the intruder does not move any more. 
Dynamics \eqref{eq:dyn} without the $\delta(\x)$ term fail to model this aspect.

The intruder may have access to $\x(t)$ or only a part of it at any time $t$, which the intruder uses to construct its instantaneous heading angle $\theta_{N+1}(t)$. 
The policy for computing $\theta_{N+1}(t)$ depends on the intruder's objective function, which is assumed to be unknown to the defenders.\footnote{ Note that the objective of reaching the target can be expressed as an optimization problem with several different objective functions. 
Each objective function results in a particular strategy for computing $\theta_{N+1}(t)$.} 
One particular strategy is to head directly toward the target (i.e., shortest path) with maximum speed. 
In this letter, we assume that the defenders are unaware of the intruder's control policy and therefore cannot estimate $\theta_{N+1}(t)$. 
In other words, the defenders perform simultaneous capture for an unknown intruder model. 
%
The objective in this letter is to find sufficient condition on the defenders' speed, sensing, and communication capabilities for simultaneous capture and derive an upper bound on the capture time.

Without loss of generality, we may assume that the target is at the origin of our coordinate system, i.e., $\mathcal{T}=0$.
Otherwise, we may study the problem in a shifted coordinate system by considering the state $\Tilde{\x}_i \triangleq \x_i - \mathcal{T}$ instead of $\x_i$ for all $i$.

\section{Solution Approach: A Consensus Problem}

Although simultaneous capture problems \cite{von2019multi, von2020robust} have been studied in the past, our problem does not fit those existing formulations for two main reasons, both of them are related to \textit{information structures}:  
(i) Only a subset of the defenders can sense the intruder, which results in \textit{partial and asymmetric state information} for the defenders, and (ii) intruder's control objective and the speed parameter $v_{N+1}$ are not be known to the defenders, resulting in an \textit{incomplete information} game.

Our proposed solution in this letter is based on a consensus principle, where the defenders will continuously exchange their locations $\x_i(t)$ with each other to agree on a capture point in an adaptive fashion. 
The solution does not use/exploit the dynamics \eqref{eq:intruder_dynamics} and therefore, can be applicable for other simultaneous capture problems with unknown evader dynamics of the form $\dot{\x}_{N+1} = f(\x)$.

Consensus algorithms, as discussed in \cite{olfati2007consensus}, have proven to be a classical approach for addressing control theoretic problems with partial information structures.
In our formulation, where defenders can communicate with each other, a consensus approach naturally emerges as an efficient means to compute the defenders' control actions for simultaneous capture. 
While it is important to note that a consensus-based approach may not always yield an optimal solution for simultaneous capture, we demonstrate in this letter that it is an unique and powerful tool with significant potential for addressing this type of problems.
\subsection{Consensus Protocol for Simultaneous Capture}
\label{consensusdynamicssec}

%
We prescribe the defenders to follow the dynamics
\begin{align}
\label{nonlindynamics}
    \dot{\x}_i(t)=v_i \frac{\sum \nolimits _{{j=1}}^N[w_{i j}(\x_j(t)-\x_i(t))]+b_i(\x_{N+1}(t) - \x_i(t))}{\| \sum \nolimits _{{j=1}}^N[w_{i j}(\x_j(t)-\x_i(t))]+b_i(\x_{N+1}(t) - \x_i(t)) \|}, 
\end{align}
where $w_{ij} \ge 0$ for all $i, j$, and  $w_{ij} > 0 $ if and only if defender $j$ can send messages to defender $i$. 
If the $i$-th defender can sense the intruder, then $b_i = 1$ otherwise $b_i = 0$. 
The variables {\small $\{w_{ij}\}_{j = 1}^N$} denote the communication capability of defender $i$, and $b_i$ denote its sensing capability. 

One may verify that the proposed defender dynamics \eqref{nonlindynamics} is indeed of the form \eqref{eq:dyn}, with an appropriate choice for $\theta_i(t)$. 
The proposed nonlinear consensus dynamics ensure \textit{finite time consensus}, as will be shown in \Cref{thm:main}.

Let us define the matrix $W \in \Re^{N \times N}$ such that
\begin{align} \label{eq:W}
    [W]_{ij} = \begin{cases}
        -w_{ij}, \qquad &i\ne j \\ 
        \sum_{j=1}^{N}w_{ij} + b_i, \qquad &i = j.  
    \end{cases}
\end{align}
We will make the following assumptions in this work.
\begin{assm} \label{assm:communication}
    The communication graph is symmetric (i.e., $w_{ij}= w_{ji}$ for all $i,j$) and connected. 
\end{assm}
The symmetry assumption simplifies the analysis; however, without a strongly connected communication graph, simultaneous capture is not guaranteed. Such graphs are generally required for consensus algorithms.
\begin{assm} \label{assm:sensing}
    There exists an $i$, such that $b_i = 1$.
\end{assm}
This assumption is to ensure that at least one defender can sense the intruder.

To proceed with our analysis, let us define the variables $\xii_i(t) = \x_i(t) - \xa(t)$. 
Using \eqref{nonlindynamics}, we obtain the dynamics of $\xii_i$'s as follows:
\begin{align}
    \label{eq:xi_dynamics}
 \begin{split}   
    \dot{\xii}_i(t) = & v_i \frac{\sum \nolimits _{{j=1}}^N[w_{i j}(\xii_j(t)-\xii_i(t))]- b_i \xii_i(t)}{\| \sum \nolimits _{{j=1}}^N[w_{i j}(\xii_j(t)-\xii_i(t))] - b_i \xii_i(t)) \|} \\
     & \qquad - v_{N+1}\begin{bmatrix}
        \cos(\theta_{N+1}(t)) \\
        \sin(\theta_{N+1}(t))
    \end{bmatrix}.
    \end{split}
\end{align}
We may now rewrite dynamics \eqref{eq:xi_dynamics} as 
\begin{align}
\label{nonlindynmat}
    \dot{\xii}_i(t)= - v_i\frac{([W]_i \otimes I_2) \xii(t)}{\| ([W]_i \otimes I_2) \xii(t)\|} - v_{N+1} \begin{bmatrix}
        \cos(\theta_{N+1}(t)) \\
        \sin(\theta_{N+1}(t))
    \end{bmatrix},
\end{align}
where $[W]_i$ denotes the $i$-th row of matrix $W$, $I_2$ is the two dimensional identity matrix, and we have defined $\xii=[\xii_1\T,\dots,\xii_N\T]\T$.
By further defining 
\begin{align} \label{eq:Fx}
\begin{split}
        F(\xii)        =\textrm{diag}\bigg\{\frac{v_1}{\|{([W]_1}\otimes I_2)\xii\|}, &\ \frac{v_2}{\|{([W]_2} \otimes I_2)\xii\|}, \  \dots, \\
        &\  \frac{v_{N }}{\|{([W]_{N} \otimes I_2)}\xii\|} \bigg\},
\end{split}
\end{align}
and $\bar W(\xii) = (F(\xii) W) \otimes I_2$, we may compactly write \eqref{nonlindynmat} as
\begin{align} \label{eq:compact_dynamics}
    \dot\xii = - \bar{W}(\xii) \xii - v_{N+1}(\mathds{1}\otimes I_2)    \begin{bmatrix}
        \cos(\theta_{N+1}(t)) \\
        \sin(\theta_{N+1}(t))
    \end{bmatrix},
\end{align}
where $\mathds{1} \in \Re^{N}$ is a vector of all ones.
The following lemma will be instrumental in our subsequent consensus analysis. 
For this lemma, let us write 
$W$ in \eqref{eq:W} as follows:
\begin{align} \label{eq:W_decomposition}
    W = \underset{W_1}{\underbrace{\begin{bmatrix}
        ~w_{11} & \cdots & -w_{1N}\\
        \vdots & \vdots & \vdots\\
        -w_{N1} & \cdots & ~w_{NN}\\
    \end{bmatrix}}} + 
    \underset{W_2}{\underbrace{\begin{bmatrix}
        b_1 & \cdots & 0\\
        \vdots & \vdots & \vdots\\
        0 & \cdots & b_N\\
    \end{bmatrix}}},
\end{align}
which decomposes the `communication' and `sensing' capabilities into two separate matrices $W_1$ and $W_2$, respectively. 
Due to \Cref{assm:sensing}, we have $W_2 \ne 0$.
\begin{lemma}
    \label{lem:W_properties}
    For the given $W$ matrix in \eqref{eq:W_decomposition}, we have
    \begin{enumerate}
        \item $W$ is a positive-definite matrix. 
        \item $ \lambda_{\min}(W) \ge \min_{\gamma} \frac{\lambda_2(W_1) + \frac{m}{N}\left(|\gamma| - \sqrt{\frac{N-m}{m}} \right)^2}{\gamma^2 + 1}$,
        where $\lambda_{\min}(\cdot)$ and $\lambda_2(\cdot)$ denote the smallest and second smallest eigenvalue of a matrix, respectively, and $m = \sum_{i=1}^N b_i$.
    \end{enumerate}
\end{lemma}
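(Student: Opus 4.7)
The plan is to prove part~1 by a direct quadratic-form argument and part~2 by bounding the Rayleigh quotient via an orthogonal decomposition of a unit vector along $\mathds{1}$ and $\mathds{1}^{\perp}$, reducing the estimate to a one-parameter optimization over $\gamma$.

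For part~1, under \Cref{assm:communication} the matrix $W_1$ is the symmetric (weighted) graph Laplacian of a connected graph, hence positive semidefinite with nullspace exactly $\mathrm{span}(\mathds{1})$; the matrix $W_2$ is diagonal with nonnegative entries and, by \Cref{assm:sensing}, satisfies $\mathrm{tr}\,W_2 = m \ge 1$. Thus $W = W_1 + W_2$ is positive semidefinite. If $x\T W x = 0$, then separately $x\T W_1 x = 0$ and $x\T W_2 x = 0$; the first forces $x = c\mathds{1}$, and then $x\T W_2 x = c^2 m$ forces $c = 0$, yielding strict positive definiteness.

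For part~2, I would use $\lambda_{\min}(W) = \min_{\|x\|=1} x\T W x$ and decompose $x = \alpha\mathds{1}/\sqrt{N} + y$ with $y \perp \mathds{1}$ and $\alpha^2 + \|y\|^2 = 1$. The Laplacian part gives $x\T W_1 x = y\T W_1 y \ge \lambda_2(W_1)\|y\|^2$. For the sensing part, write $S = \{i : b_i = 1\}$, $|S| = m$, and $s := \sum_{i\in S} y_i$; Cauchy--Schwarz yields $x\T W_2 x = \sum_{i\in S}x_i^2 \ge (m\alpha/\sqrt{N}+s)^2/m$. The crucial ingredient is a sharper bound on $|s|$ that exploits $y \perp \mathds{1}$: since $s = \langle y,\mathds{1}_S\rangle = \langle y, P_{\mathds{1}^{\perp}}\mathds{1}_S\rangle$ and $\|P_{\mathds{1}^{\perp}}\mathds{1}_S\|^2 = m - m^2/N = m(N-m)/N$, one gets $|s| \le \|y\|\sqrt{m(N-m)/N}$. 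Introducing the single parameter $\gamma = \alpha/\|y\|$ (so $\|y\|^2 = 1/(\gamma^2+1)$) and combining both bounds in the regime $|\gamma| \ge \sqrt{(N-m)/m}$ produces
\begin{equation*}
x\T W x \;\ge\; \frac{\lambda_2(W_1) + \frac{m}{N}\bigl(|\gamma| - \sqrt{(N-m)/m}\bigr)^2}{\gamma^2+1} \;=:\; f(\gamma).
\end{equation*}

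The step I expect to be the most delicate is the complementary regime $|\gamma| < \sqrt{(N-m)/m}$, in which the Cauchy--Schwarz estimate on $x\T W_2 x$ degenerates to the trivial bound $0$ (the allowed values of $s$ can annihilate $m\alpha/\sqrt{N}+s$), leaving only the weaker $x\T W x \ge \lambda_2(W_1)/(\gamma^2+1)$. I would close this gap by verifying that $f$ is strictly decreasing on $[0,\sqrt{(N-m)/m}]$---both the shrinking squared term in the numerator and the growing denominator push $f$ downward---so that $\min_{\gamma\in\mathbb{R}} f(\gamma)$ is attained in the ``valid'' regime $|\gamma| \ge \sqrt{(N-m)/m}$; moreover, throughout the complementary regime $\lambda_2(W_1)/(\gamma^2+1) \ge \lambda_2(W_1)m/N = f(\sqrt{(N-m)/m}) \ge \min_\gamma f(\gamma)$. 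Taking the infimum of the Rayleigh quotient over all unit $x$ then yields $\lambda_{\min}(W) \ge \min_\gamma f(\gamma)$, as claimed.
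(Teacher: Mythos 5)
Your proof is correct, and it follows the same overall architecture as the paper's: decompose a unit vector along $\mathrm{span}\{\mathds{1}\}$ and its orthogonal complement, bound the $W_1$ part by $\lambda_2(W_1)\|y\|^2$, and reduce the Rayleigh quotient to a one-parameter function of $\gamma$. Part~1 is verbatim the paper's argument. The difference lies in how the inner optimization over the component $y\perp\mathds{1}$ is handled, and here your version is actually the more careful one. The paper exhibits an explicit vector $v_2^*$ (with $v_{2i}^*=-\sign(\gamma)\sqrt{(N-m)/(Nm)}$ on the sensing set and $v_{2i}^*=\sign(\gamma)\sqrt{m/(N(N-m))}$ off it) and asserts it is the minimizer for every $\gamma$, yielding the equality $\min_{v_2}\sum_i b_i(\gamma/\sqrt{N}+v_{2i})^2=\frac{m}{N}\bigl(|\gamma|-\sqrt{(N-m)/m}\bigr)^2$. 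As your Cauchy--Schwarz analysis makes explicit, that identity only holds for $|\gamma|\ge\sqrt{(N-m)/m}$; in the complementary regime the true minimum is $0$ (the term $m\alpha/\sqrt{N}+s$ can be annihilated while respecting $y\perp\mathds{1}$ and $\|y\|^2=1/(\gamma^2+1)$), so the paper's claimed equality overstates the minimum there. The lemma survives precisely because of the patch you supply: on $|\gamma|<\sqrt{(N-m)/m}$ the surviving bound $\lambda_2(W_1)/(\gamma^2+1)$ already exceeds $\lambda_2(W_1)m/N=f\bigl(\sqrt{(N-m)/m}\bigr)\ge\min_\gamma f(\gamma)$. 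So your proof buys a rigorous justification of a step the paper glosses over, at the cost of an extra case split. The only loose end is the degenerate case $y=0$ (i.e., $x=\pm\mathds{1}/\sqrt{N}$), for which $\gamma$ is undefined; there $x\T Wx=m/N=\lim_{\gamma\to\infty}f(\gamma)\ge\min_\gamma f(\gamma)$, which is exactly how the paper disposes of its $\beta=0$ case---add that one line and the argument is complete.
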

\begin{proof}
    The proof is presented in Appendix~\ref{AP:W_properties}.
\end{proof}
One may solve the minimization over $\gamma$ in \Cref{lem:W_properties} quite straightforwardly.
However, the ultimate result from such minimization does not provide much fundamental insights on how $m/N$ and $\lambda_2(W_1)$ may affect the minimum eigenvalue of $W$. 
Instead, some qualitative understanding on the relationship between $\lambda_{\min}(W)$ and $\lambda_2(W_1)$ and $m/N$ will be more useful.
To that end, it is noteworthy that the lower bound on $\lambda_{\min}(W)$ increases with $\lambda_2(W_1)$, which has significant importance,
as we will later show how this could help control the simultaneous capture time or find sufficient conditions on agents' velocities for simultaneous capture
(c.f.~\Cref{sec:tradeoff}.)

\section{Consensus Analysis}
In this section we analyze the consensus dynamics and derive an upper bound on the consensus time. 
The main result of this work is presented next in \Cref{thm:main}. 
\begin{theorem} \label{thm:main}
    Let $v_{\min} \sqrt{\lambda_{\min}(W)}  - v_{N+1}\sqrt{\sum_{i=1}^N b_i} > 0$. Then simultaneous capture happens at $t^*$, where 
\begin{align}
\label{approximatetime}
    t^*\leq  \frac{\sqrt{\xii(0)\T (W\otimes I_2) \xii(0)}}{v_{\min} \sqrt{\lambda_{\min}(W)}  - v_{N+1}\sqrt{\sum_{i=1}^N b_i}}.
\end{align} \vspace{.25 cm}
\end{theorem}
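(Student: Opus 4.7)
The plan is to use the quadratic form $V(\xii) = \xii\T (W \otimes I_2) \xii$ as a Lyapunov function; by Lemma~\ref{lem:W_properties}, $W \succ 0$, so $V$ is a genuine norm squared, and $V(\xii) = 0$ if and only if $\xii = 0$, which is exactly the simultaneous capture condition. I will show that $V$ decays fast enough that $\sqrt{V}$ drops linearly in time, which immediately yields the bound on $t^{*}$. Concretely, I would compute $\dot V = 2\xii\T (W\otimes I_2)\dot\xii$ along the closed-loop dynamics \eqref{eq:compact_dynamics} and split the right-hand side into a ``cooperative'' contribution from $\bar W(\xii)\xii$ and an ``adversarial'' contribution from the intruder heading.

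For the cooperative term, the $F W$ structure combined with $W = W\T$ should collapse the expression to $-2 \sum_{k} v_k \, \|([W]_k\otimes I_2)\xii\|$, because the $1/\|([W]_k\otimes I_2)\xii\|$ factors cancel exactly one copy of the row-norm. I would then lower-bound this using $v_k \geq v_{\min}$ together with the elementary inequality $\sum_k a_k \geq \sqrt{\sum_k a_k^2}$ for $a_k \geq 0$, and observe that $\sum_k \|([W]_k\otimes I_2)\xii\|^2 = \xii\T (W^2\otimes I_2)\xii \geq \lambda_{\min}(W)\, V$. This gives a clean bound of the form $-2 v_{\min}\sqrt{\lambda_{\min}(W)\, V}$. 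For the adversarial term, a row-sum computation (using the Laplacian-like structure $W\mathds{1}$ picks out the $b_i$ entries from $W_2$) reduces the cross term to $-2 v_{N+1}\sum_{i} b_i\, \xii_i\T \mathbf{u}$ with $\mathbf{u} = [\cos\theta_{N+1},\sin\theta_{N+1}]\T$ and $\|\mathbf{u}\|=1$. Cauchy--Schwarz then gives $|\sum_i b_i \xii_i\T \mathbf{u}| \leq \sqrt{m}\sqrt{\sum_i b_i \|\xii_i\|^2} = \sqrt{m}\sqrt{\xii\T (W_2\otimes I_2)\xii}$, and the key observation is that $W_2 \preceq W$ (since $W_1 = W - W_2$ is PSD), so this is in turn bounded by $\sqrt{m \, V}$.

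Combining, $\dot V \leq -2\bigl(v_{\min}\sqrt{\lambda_{\min}(W)} - v_{N+1}\sqrt{m}\bigr)\sqrt{V}$, i.e.\ $\frac{d}{dt}\sqrt{V} \leq -\bigl(v_{\min}\sqrt{\lambda_{\min}(W)} - v_{N+1}\sqrt{m}\bigr)$ whenever $V > 0$. Integrating yields $\sqrt{V(t)} \leq \sqrt{V(0)} - (v_{\min}\sqrt{\lambda_{\min}(W)} - v_{N+1}\sqrt{m})\, t$, which forces $V$ to hit zero no later than the claimed $t^{*}$, giving simultaneous capture. I expect the main technical obstacle to be the second step: the adversarial bound must land with $\sqrt{V}$ rather than, e.g., $\|\xii\|$ (which would introduce an unwanted $1/\sqrt{\lambda_{\min}(W)}$ factor and break the clean comparison with the cooperative term); the trick of routing the estimate through $\xii\T(W_2\otimes I_2)\xii$ and using $W_2 \preceq W$ is what makes the two terms combine into the single $\sqrt{V}$ rate in \eqref{approximatetime}. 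A secondary but important point is well-posedness of \eqref{nonlindynamics} at instants where a denominator vanishes; I would handle this by a standard limiting/Filippov argument or by noting that $V$ remains strictly positive up to the capture time, so any such singularity can only occur at $t=t^{*}$ itself.
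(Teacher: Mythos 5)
Your proposal is correct and takes essentially the same route as the paper: the identical Lyapunov function $V=\xii\T(W\otimes I_2)\xii$, the same collapse of the cooperative term to $-2\sum_k v_k\|([W]_k\otimes I_2)\xii\|$ with the same $\lambda_{\min}(W)$ lower bound, and the same $\sqrt{m}\sqrt{V}$ bound on the intruder term (the paper obtains it by Cauchy--Schwarz with $W^{1/2}$ and $\mathds{1}\T W\mathds{1}=m$, which is equivalent to your $W\mathds{1}=[b_1,\dots,b_N]\T$ computation combined with $W_2\preceq W$). Your closing remark on well-posedness when a denominator in \eqref{nonlindynamics} vanishes is a valid technical point that the paper's proof silently skips.
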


\begin{proof}
    Consider the function $V= \xii\T (W\otimes I_2)\xii$, where $W$ is defined in \eqref{eq:W}. 
    Due to the positive definiteness of $W$ (c.f.~Lemma~\ref{lem:W_properties}), we have $V > 0$ for all $\xii \ne 0$. 
    Furthermore, $V(t) = 0$ ensures $\xii(t) = 0$, i.e., consensus occurs.  

Taking the time derivative of $V$ and using \eqref{eq:compact_dynamics}, we may write 
\begin{align*} 
    \dot{V} & = - \xii\T \big((W\otimes I_2) \bar W(\xii) +\bar W\T(\xii) (W\otimes I_2)\big)\xii \\
     &  -2 v_{N+1}\xii\T (W\otimes I_2)  (\mathds{1}\otimes I_2) \uvec(\theta_{N+1}) 
     \nonumber \\
     & = -2 \xii\T((WF(\xii)W) \otimes I_2)\xii - 2 v_{N+1}\xii\T (W\otimes I_2)  (\mathds{1}\otimes I_2) \uvec(\theta_{N+1})
\end{align*}
where we have used $(A\otimes B)(C \otimes B) = (AC)\otimes B$ and $(A\otimes B)\T = A\T \otimes B\T$. 
Given that $F(\xii)$ is diagonal, we may further simplify $\xii\T((WF(\xii)W) \otimes I_2)\xii$ and express $\dot{V}$ as
\begin{align*}
    \dot{V} 
     & = -2 \sum\nolimits_{i=1}^N v_i \|([W]_i \otimes I_2) \xii \| - 2 v_{N+1}\xii\T (W\otimes I_2)  (\mathds{1}\otimes I_2) \uvec(\theta_{N+1}),
\end{align*}
where we have used \eqref{eq:Fx} to substitute the expressions for {\small $[F(\xii)]_{ii}$}.
To upper bound $\dot{V}$, we use the Cauchy-Schwartz inequality:
\begin{align*}
    \xii\T (W\otimes I_2)  & (\mathds{1}\otimes I_2) \uvec(\theta_{N+1}) \\
    &\le \|(W^{\frac{1}{2}}\otimes I_2) \xii\| \|(W^{\frac{1}{2}}\otimes I_2)(\mathds{1}\otimes I_2) \uvec(\theta_{N+1})\|
\end{align*}
where $W^{\frac{1}{2}}$ is a matrix such that $(W^{\frac{1}{2}})\T W^{\frac{1}{2}} = W$. 
Now, notice that $\|(W^{\frac{1}{2}}\otimes I_2) \xii\| = \sqrt{V}$, and we may simplify 
$\|(W^{\frac{1}{2}}\otimes I_2)(\mathds{1}\otimes I_2) \uvec(\theta_{N+1})\|$ as 
\begin{align*}
    \|(W^{\frac{1}{2}}\otimes I_2)(\mathds{1}\otimes I_2) \uvec(\theta_{N+1})\|^2 &= (\mathds{1}\T W \mathds{1}) \uvec(\theta_{N+1})\T \uvec(\theta_{N+1}) \\
    & = \mathds{1}\T W \mathds{1} = \sum\nolimits_{i=1}^N b_i.
\end{align*}
Therefore, we may now upper bound $\dot{V}$ as
\begin{align} \label{eq:partial_upper_bound_V_dot}
    \dot{V} \le - 2 \sum\nolimits_{i=1}^N v_i \|([W]_i \otimes I_2) \xii \| + 2 v_{N+1} \sqrt{\sum\nolimits_{i=1}^N b_i} \sqrt{V}.
\end{align}
Using similar arguments, we may verify that 
\begin{align*}
    \sum\nolimits_{i=1}^N v_i \|([W]_i \otimes I_2) \xii \| & \ge v_{\min} \|(W \otimes I_2)\xii\|\\
    \ge v_{\min} \sqrt{\lambda_{\min}(W)} & \|(W^{\frac{1}{2}} \otimes I_2)\xii\| = v_{\min} \sqrt{\lambda_{\min}(W)} \sqrt{V}.
\end{align*}
Thus, we may write from \eqref{eq:partial_upper_bound_V_dot}
\begin{align}
    \dot{V} \le -2c\sqrt{V},
\end{align}
where $c = v_{\min} \sqrt{\lambda_{\min}(W)}  - v_{N+1}\sqrt{\sum_{i=1}^N b_i} $. 
Consequently, 
\begin{align*}
    \sqrt{V(t)} \le \sqrt{V(0)} - ct,
\end{align*}
for all $t\ge 0$. 
If $t^*$ is the time when consensus happens, then $V(t^*) = 0$, and hence
\begin{align*}
    t^* \le  \frac{\sqrt{\xii(0)\T (W\otimes I_2) \xii(0)}}{v_{\min} \sqrt{\lambda_{\min}(W)}  - v_{N+1}\sqrt{\sum_{i=1}^N b_i}} . 
\end{align*}
\end{proof}
\begin{remark}
The upper bound on the consensus time depends on the initial locations of all the agents, the matrix $W$, and the speeds $v_i$'s. 
The defender with the slowest speed greatly influences the consensus time. 
\end{remark}

\begin{remark}
    A sufficient condition for simultaneously capturing the intruder before it reaches the target is
    \begin{align} \label{eq:capture_sufficient_condition}
        \frac{\sqrt{\xii(0)\T (W\otimes I_2) \xii(0)}}{\frac{\|\x_{N+1}(0)\|}{\nu_{N+1}}  } \le v_{\min} \sqrt{\lambda_{\min}(W)}  - v_{N+1}\sqrt{\sum\nolimits_{i=1}^N b_i} ,
    \end{align}
    where $\frac{\|\x_{N+1}(0)\|}{\nu_{N+1}}$ is the earliest time the intruder may reach the target by heading directly to the target. 
    Simplifying \eqref{eq:capture_sufficient_condition} yields
    \begin{align} \label{eq:new_sufficient_condition}
        \frac{\nu_{\min}}{\nu_{N+1}} \ge \frac{1}{\sqrt{\lambda_{\min}(W)}} \left( \frac{\sqrt{\xii(0)\T (W\otimes I_2) \xii(0)}}{\|\x_{N+1}(0)\|} + \sqrt{\sum\nolimits_{i=1}^N b_i}~\right).
    \end{align}
\end{remark}

\subsection{Sensing versus Communication} \label{sec:tradeoff}
Combining \eqref{eq:new_sufficient_condition} and \Cref{lem:W_properties}, it suffices to have 
\begin{align} \label{eq:speed_conditions}
        v_{\min}^2 \min_{\gamma}  \frac{\frac{\lambda_2(W_1)}{m} + \frac{1}{N}\left(|\gamma| - \sqrt{\frac{N-m}{m}} \right)^2}{\gamma^2 + 1} \ge v_{N+1}^2  \alpha^2, 
    \end{align}
    where $\alpha =  \left( \frac{\sqrt{\xii(0)\T (W\otimes I_2) \xii(0)}}{\|\x_{N+1}(0)\|} + \sqrt{\sum\nolimits_{i=1}^N b_i}~\right)$. 
    Notice that, for a fixed $m$, the defenders can improve their performance (i.e., faster capture time) by maximizing the ratio  $\frac{\lambda_2(W_1)}{m}$, which denotes the relative strength in communication versus sensing. 
    While sensing is a requirement (i.e., satisfaction of \Cref{assm:sensing}), the strength of inter-agent communication (i.e., $\lambda_2(W_1)$) should be high, according to \eqref{eq:speed_conditions}. 

    Note that the sensing capability $m$ appears in $\alpha$.
    A higher sensing capability increases the r.h.s. in \eqref{eq:speed_conditions}, and  requires a higher value for $\lambda_2(W_1)$ to satisfy the condition. 
    This shows that one may not tradeoff communication capability for sensing. 
    Rather, when $m$ increases, qualitatively, $\lambda_2(W_1)$ should also increase. 
    Basically, the defenders should be more cohesive with each other than with the intruder.


As the problem is of simultaneous capture, the primary objective is the `simultaneous' part, as `capture' would not happen if any of the defenders are missing at the capture time.
Thus, in a way, the consensus among the defenders overshadows the `capture' part, which one may notice from \eqref{eq:speed_conditions} that a lower inter-agent connectivity i.e., smaller $\lambda_2(W_1)$ requires a longer capture time. 
Furthermore, when the defender communication graph is not connected, i.e., $\lambda_2(W_1) = 0$, then \eqref{eq:speed_conditions} cannot be satisfied.

While this letter has provided the sufficient condition for simultaneous capture,  the necessary condition may also provide further insights on the sensing versus communication strength tradeoff. 
We leave this as a potential future direction.

\section{Simulation Results}
Our simulation scenario involves four defenders $(N=4)$. 
We simulate the agents' trajectories, capture time, as well as we identify the initial conditions that lead to (or, do not lead to) a simultaneous capture. 
We also experiment with the speeds, sensing, and communication capabilities. 
In all of the following experiments $w_{ij} \in \{0,1\}$.
\subsection{Agents Trajectories and Consensus Time}
Here, the initial location of the attacker is $\x_{5}(0)=[-5,10]\T$ and its speed is $\nu_{5}=0.1$. We experimented two scenarios here: homogeneous and heterogeneous configurations.

\subsubsection{Homogeneous Defender Configuration}
Defenders' initial locations are symmetric with respect to the target with $\x_1(0)=[5,5]\T $, $ \x_2(0)=[-5,-5]\T $, $\x_3(0)=[-5,5]\T$, and $\x_4(0)=[5,-5]\T$. 
All of them have a speed of 1. 
Their communication graph is \textit{complete}, and all of them can sense the attacker.

\subsubsection{Heterogeneous Defender Configuration}
In this case, we kept the defenders' locations the same as before, but gave them heterogeneous speed, sensing, and communication capabilities. 
Here, $\nu_1=1.3, \nu_2=1.4, \nu_3=1.5, \nu_4=1.4$. 
Defenders 2 and 3 can not sense the attacker. Defenders 1 and 2 do not communicate with each other.

All the parameters for the homogeneous and heterogeneous configurations are chosen such that they satisfy the sufficient condition derived earlier. 
The agent trajectories are shown in Fig.~\ref{fig:capturetraj}: left one is for the homogeneous case, and the one on the right is for the heterogeneous case. 
The capture times for the homogeneous and heterogeneous scenarios are $14.01$ and $11.36$, respectively. 
Faster velocities for the defenders reduce the capture time.
The upper bounds on the capture times computed from \Cref{thm:main} are $48.41$ and $46.93$, respectively. 

\begin{figure}[t]
     \centering
     \includegraphics[trim= 70 60 70 80, clip, height = 4.0 cm, width=0.4\linewidth]{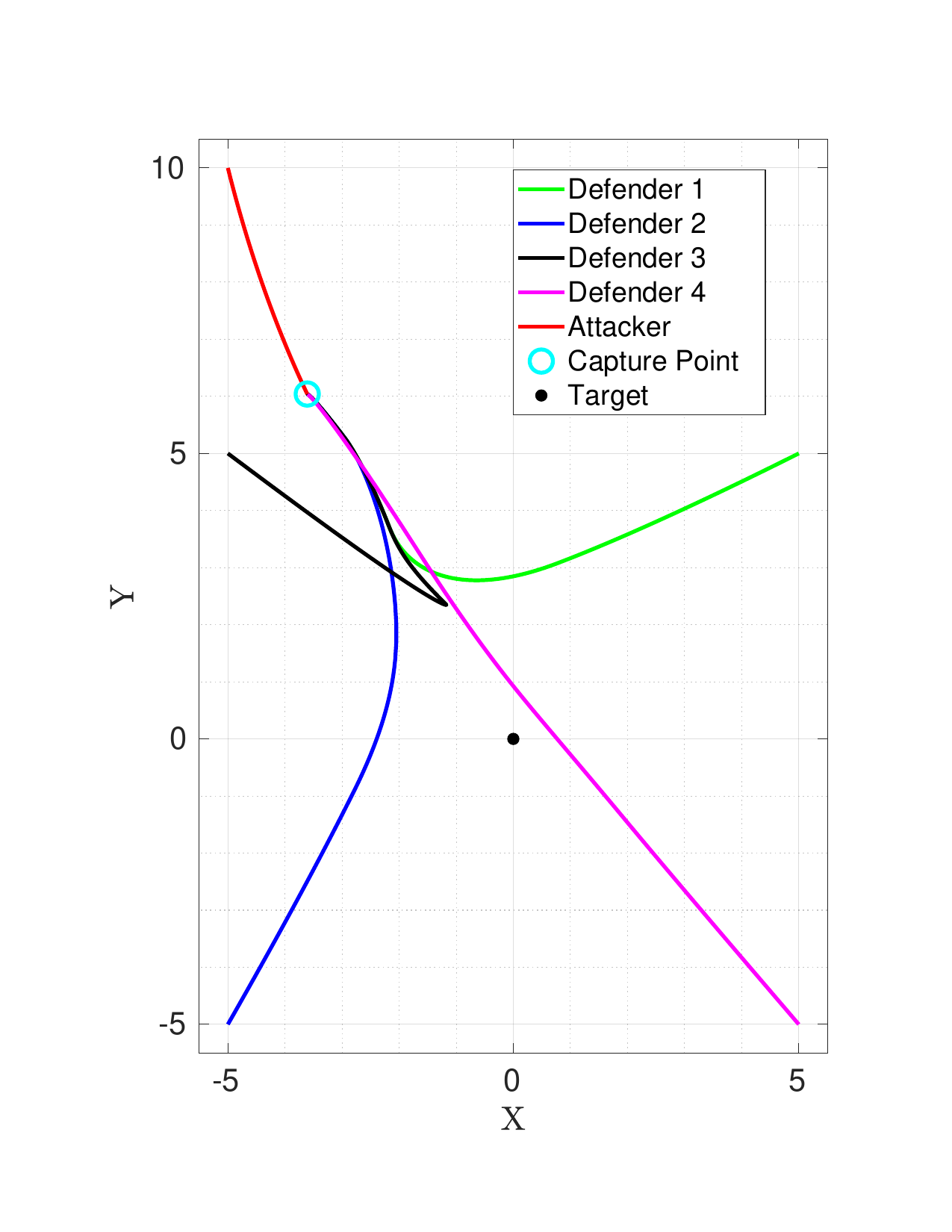} \includegraphics[trim= 70 60 70 80, clip, height = 4.0 cm, width=0.4\linewidth]{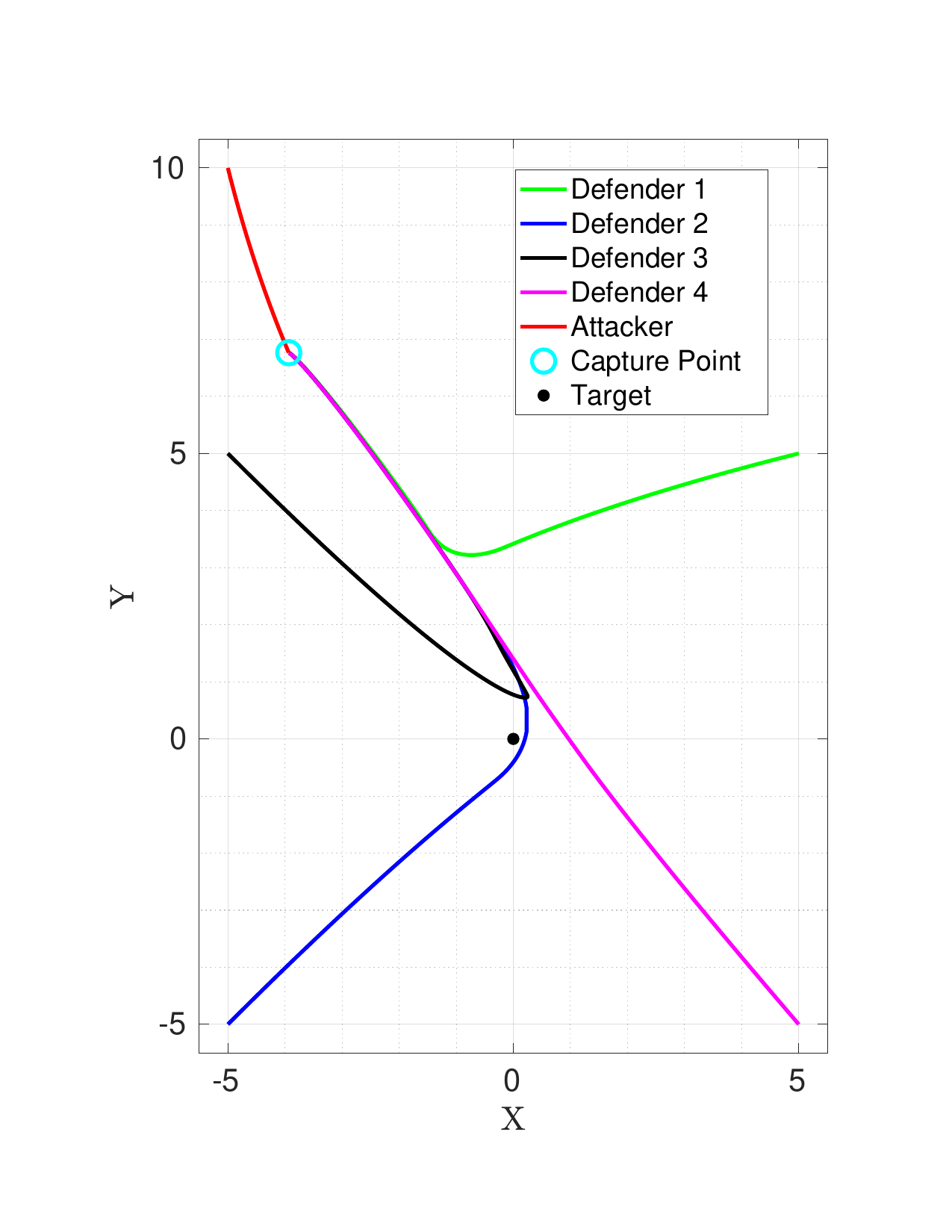}
     \caption{Left: Agents' trajectories in $\mathbb{R}^2$ where the defenders are homogeneous: $\nu_i=1$ for $i=\{1,2,3,4\}$, their communication graph is complete, and they all can sense the attacker. The intruder velocity is $\nu_5=0.1$. Right: Agents' trajectories in $\mathbb{R}^2$ for a heterogeneous group of defenders. Defender velocities are $\nu_1=1.3, \nu_2=1.4, \nu_3=1.5, \nu_4=1.4$. Defenders 2 and 3 do not sense the attacker and defenders 1 and 2 do not communicate with each other. The intruder velocity is still $\nu_5=0.1$.} \vspace{-1 em}
 \label{fig:capturetraj}
\end{figure}

\subsection{Capturable Initial Conditions}
Here, for a given set of parameters, the objective is to find the initial locations for the intruder so that simultaneous capture is guaranteed.
An illustration of that is presented in Fig.~\ref{fig:capturemap}, where the colorbar denotes the time to capture. 
Any intruder starting from the white region in the middle will be able to breach the target. 

The intruder speed is $0.5$. 
For the defenders, we consider the homogeneous parametric setting: $\x_1(0)=[5,5]\T $, $  \x_2(0)=[-5,-5]\T$ ,  $\x_3(0)=[-5,5]\T$ ,   $\x_4(0)=[5,-5]\T$.
All the defenders have a speed of $1$, and they all can sense the intruder, and their communication graph is complete. 
The capture time is symmetric with respect to the target, consistent with the homogeneous setup. As predicted in \eqref{approximatetime}, it increases with the attacker's distance from the target, reflecting the{\small$\sqrt{\xii(0)\T (W\otimes I_2) \xii(0)}$} term in \eqref{approximatetime}.
\begin{figure}
     \centering
     \includegraphics[trim= 100 245 140 252, clip, width=0.6\linewidth]{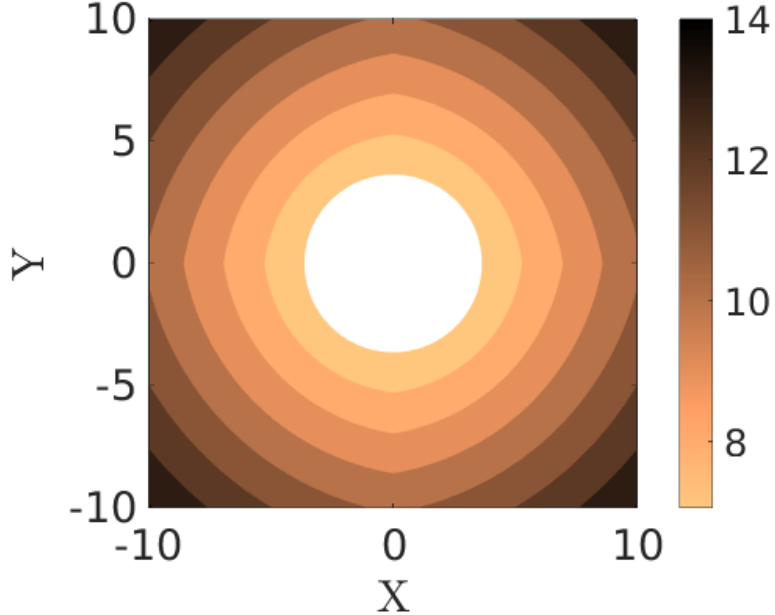} 
     \caption{Capture time for different starting points of the attacker and fixed defender locations. The white region represents the attacker's starting points from which a breach of the target is inevitable. 
     For other starting points of the attacker, the capture time is displayed using the colormap.}
 \label{fig:capturemap}
\end{figure}

\subsection{Parameter Variation Study}
The objective is to study how the boundary of the white region in Fig.~\ref{fig:capturemap} (\textit{non-capturable initial conditions}) changes with variations in speed, sensing, and communication parameters.

\subsubsection{Speed Variations}
We again consider a symmetric starting locations for the defenders: $\x_1(0)=[5,5]\T,~ \x_2(0)=[-5,-5]\T,~ \x_3(0)=[-5,5]\T, ~\x_4(0)=[5,-5]\T$, but vary the speed of fouth defender.  
The first three defenders have a speed of $1$ and the intruder has a speed of $0.1$. 
We consider five different values for $\nu_4 = \{0.2,0.4,0.6,0.8,1\}$.  
The capture region enlarges as the velocity of the defender increases, as one would expect. 
\begin{figure}
     \centering
     \includegraphics[trim= 37 110 30 120, clip, width=0.45\linewidth]{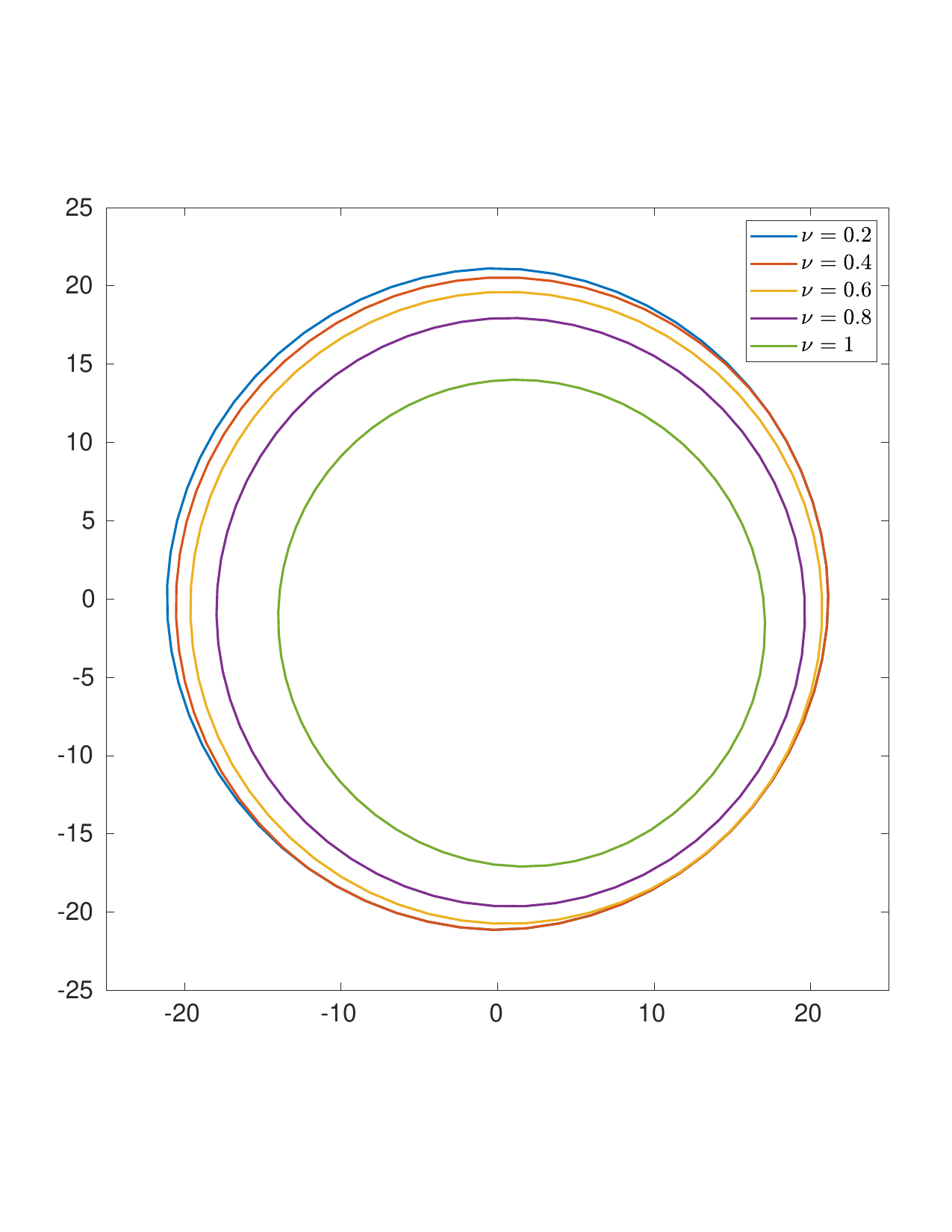} 
     \caption{Boundary of the non-capturable region as $\nu_4$ is varied. Here, $\nu_i=1$ for $i=\{1,2,3\}$ and $\nu_5=0.1$.} \vspace{-1 em}
 \label{fig:capturemap_velocity}
\end{figure}

\subsubsection{Sensing and Communication Variations} 
Here, we take $\x_1(0)=[2,10]\T,~ \x_2(0)=[-5,-6]\T,~ \x_3(0)=[-7,1]\T, ~\x_4(0)=[3,-1]\T$. 
All the defenders have speed of $1$, while the intruder has a speed of $0.6$. 
The sensing capability is represented as a `connectivity' with the intruder. 
Therefore, sensing and communication can be  jointly represented by a connectivity graph among the five agents, see Figs.~\ref{fig:capturemap_communication} and \ref{fig:capturemap_sensing}, where the intruder node is denoted by stars and the other four nodes denote the four defenders.

Fig.~\ref{fig:capturemap_communication} shows the effects of the agent communication graph on the \textit{non-capturable region}, and Fig.~\ref{fig:capturemap_sensing} illustrates the influence of sensing capability  on the \textit{non-capturable region}. 
The communication graph has more influence than the sensing capability, as discussed in \Cref{sec:tradeoff}.


\begin{figure}
     \centering
     \includegraphics[clip, width=0.6\linewidth]{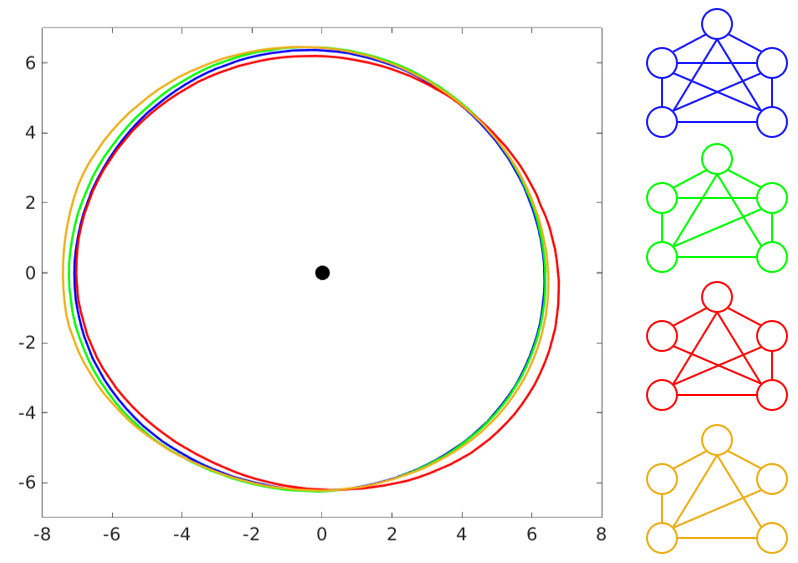}
     \put (-17.9,99.5) {{\color{blue}$ \ast$}}
     \put (-17.9,73.8) {{\color{green} $ \ast$}}
     \put (-17.8,21) {{\color{orange!60!yellow}$ \ast$}}
     \put (-17.8,47.9) {{\color{red} $ \ast$}}
     \caption{For different communication graphs, the boundary of the non-capturable region is shown. 
     Each region is shown in the same color as its corresponding communication graph. The black dot is the target $\mathcal{T}$.}
 \label{fig:capturemap_communication}
\end{figure}
 \begin{figure}
      \centering
      \includegraphics[ width=0.6\linewidth]{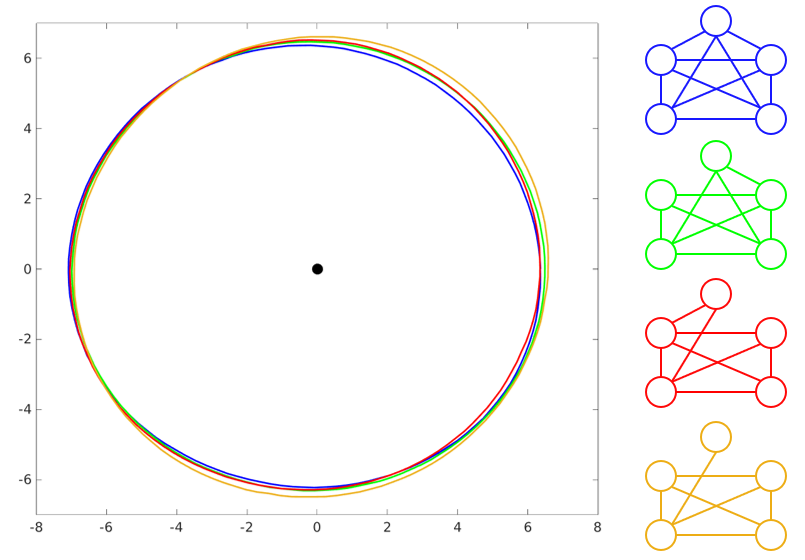} 
       \put (-19.1,99.2) {{\color{blue} $\ast$}}
     \put (-19.1,73.5) {{\color{green}$ \ast$}}
     \put (-19,47.6){$\re \ast$}
     \put (-19.1,20.6) {{\color{orange!60!yellow} $ \ast$}}
      \caption{For different sensing capabilities, the boundary of the non-capturable region is shown. 
     Each region is shown in the same color as its corresponding communication graph. The black dot is the target $\mathcal{T}$.}
  \label{fig:capturemap_sensing}
 \end{figure}

 \section{Conclusion}
\label{sec:Conclusion}
In this letter, we presented a coordinated target-defense game for a group of defenders against a single attacker. The defender team is required to capture the attacker simultaneously in order to win the game and the attacker is tasked to breach the target. Not all the defenders can sense the attacker and measure its location. However, the defenders can communicate with each other via an underlying
communication graph to simultaneously capture the attacker. We derive a sufficient condition on the defenders' sensing, communication, and speed capabilities to guarantee simultaneous capture. Using our derived expressions, we discuss the sensing versus
communication tradeoff required for simultaneous capture and provide an upper bound on the capture time. 
A natural extension of this work would be to consider a moving target, where the target is a part of the defender team.
One may also consider the same consensus problem under an event-triggered communication framework for reduction in inter-agent communication.
\bibliographystyle{IEEEtran}
\bibliography{main}

\appendix
\subsection{Proof of Lemma~\ref{lem:W_properties}} \label{AP:W_properties} 

1) One may verify that both $W_1, W_2 \succeq 0$.
In order to show that $W \succ 0$, we will use the method of contradiction as follows. 
Let us assume that there exists $\mathbf{v}\in \Re^N$ such that $\mathbf{v}\T W \mathbf{v} = 0$. 
Therefore, we must have $\mathbf{v}\T W_1 \mathbf{v} = \mathbf{v}\T W_2 \mathbf{v} = 0$.
Due to \Cref{assm:communication}, we may conclude that $\mathbf{v}\T W_1 \mathbf{v} = 0$ if and only if $\mathbf{v} = c \mathds{1}$ for some $c \in \Re$.
However, $ (c \mathds{1}) \T W_2 (c \mathds{1}) = c^2 \sum_{i=1}^N b_i$. 
Due to \Cref{assm:sensing}, we may conclude that $\sum_{i=1}^N b_i > 0$, and therefore, $ (c \mathds{1}) \T W_2 (c \mathds{1}) = 0$ if and only if $c = 0$. 
Consequently, $\mathbf{v}\T W \mathbf{v} = 0$ if and only if $\mathbf{v}$ is a zero vector. 
Thus, $W$ is a positive definite matrix.\\

2) Let $V_1 = \textrm{span}\{\mathds{1}\}$ and $V_2 = \textrm{ker}(\mathds{1}\T)$. Consequently $\Re^N = V_1 \oplus V_2$, and $V_1 \perp V_2$.  
Any $v\in \Re^N$ can be expressed as $v = \alpha v_1 + \beta v_2$ where $\alpha, \beta \in \Re$, $v_i \in V_i$ and $\|v_i\| =1$, for $i=1,2$. 
Consequently, $\|v\|^2 = \alpha^2 + \beta^2$, and $v_1 = \frac{1}{\sqrt{N}}\mathds{1}$.
Let $v$ be the eigenvector corresponding to the minimum eigenvalue of $W$. 
Therefore,
\begin{align} \label{eq:lambda_min}
   \lambda_{\min}(W) \|v\|^2 & = v\T W v = v\T W_1 v + v\T W_2 v \nonumber \\
   & = \beta^2 v_2\T W_1 v_2 + \sum_{i=1}^N b_i (\frac{\alpha}{\sqrt{N}} + \beta v_{2i})^2, 
\end{align}
where $v_{2i}$ is the $i$-th component of $v_2$. 
Simplifying \eqref{eq:lambda_min} yields
\begin{align} \label{eq:mineigenval}
    \lambda_{\min}(W) & = \frac{\beta^2 v_2\T W_1 v_2 + \sum_{i=1}^N b_i (\frac{\alpha}{\sqrt{N}} + \beta v_{2i})^2}{\alpha^2 + \beta^2} \nonumber \\
    & \ge \min_{\stackrel{\alpha, \beta, v_2}{\|v_2\|=1,~ v_2 \in V_2}}  \frac{\beta^2 v_2\T W_1 v_2 + \sum_{i=1}^N b_i (\frac{\alpha}{\sqrt{N}} + \beta v_{2i})^2}{\alpha^2 + \beta^2}.
\end{align}
Since $v_2 \in V_2$ and $\|v_2\|=1$, we may write $v_2\T W_1 v_2 \ge \lambda_2(W_1)$, where $\lambda_2(\cdot)$ denotes the second smallest eigenvalue of a matrix.
Now, let us separately consider the cases $\beta = 0$ and $\beta \ne 0$ for the minimization in \eqref{eq:mineigenval}.
$\beta = 0$ produces $\frac{m}{N}$ on the r.h.s. of \eqref{eq:mineigenval}, where $m = \sum_{i}^N b_i$. 
On the other hand, $\beta \ne 0$ results in 
\begin{align} \label{eq:f_gamma}
    \min_{\stackrel{\alpha, \beta \ne 0, v_2}{\|v_2\|=1,~ v_2 \in V_2}}& \frac{\lambda_2(W_1)  + \sum_{i=1}^N b_i (\frac{\alpha}{\beta\sqrt{N}} + v_{2i})^2}{\frac{\alpha^2}{\beta^2} + 1} \nonumber \\ 
    &= 
    \min_{\stackrel{\gamma, v_2}{\|v_2\|=1,~ v_2 \in V_2}}\frac{\lambda_2(W_1) + \sum_{i=1}^N b_i (\frac{\gamma}{\sqrt{N}} + v_{2i})^2}{\gamma^2 + 1} \nonumber \\
    & = \min_{\gamma} \frac{\lambda_2(W_1) + \frac{m}{N}\left(|\gamma| - \sqrt{\frac{N-m}{m}} \right)^2}{\gamma^2 + 1},
\end{align}
where the last equality is obtained by minimizing over $v_2$ and substituting the optimal $v_2^*$, where $v_{2i}^* = -\sign(\gamma) \sqrt{\frac{N-m}{Nm}}$ for $i$'s such that $b_i = 1$, and $v_{2i}^* = \sign(\gamma) \sqrt{\frac{m}{N(N-m)}}$ for $i$'s such that $b_i = 0$.
For brevity, let us define $f(\gamma)$ to be the function under minimization in \eqref{eq:f_gamma}.
Therefore, combining the cases $\beta=0$ and $\beta \ne 0$, we may write 
\begin{align*}
    \lambda_{\min}(W)  & \ge \min \left\{ \frac{m}{N}, \quad \min_{\gamma} f(\gamma)\right\}  = \min_{\gamma} f(\gamma),
\end{align*}
where the last equality is obtain by observing that $\min_\gamma f(\gamma) \le \lim_{\gamma \to \infty} f(\gamma) = m/N$.

\end{document}